\newcommand{\mbZ}{\mathbb Z}
\newcommand{\mbC}{\mathbb C}
\newcommand{\oM}{\overline{\mathcal M}}
\newcommand{\tg}{\widetilde g}
\newcommand{\tu}{{\widetilde u}}
\def\oM{{\overline{\mathcal{M}}}}
\def\CP{{{\mathbb C}{\mathbb P}}}
\renewcommand{\Im}{\mathrm{Im}}
\def\d{{\partial}}
\newcommand{\eps}{\varepsilon}
\newcommand{\hcA}{\widehat{\mathcal A}}
\newcommand{\DR}{\mathrm{DR}}
\newcommand{\even}{\mathrm{even}}
\newcommand{\Coef}{\mathrm{Coef}}
\newcommand{\gl}{\mathrm{gl}}
\newcommand{\hu}{\widehat{u}}
\newcommand{\oP}{{\overline{P}}}
\newcommand{\triv}{\mathrm{triv}}
\newcommand{\KdV}{\mathrm{KdV}}
\newcommand{\bu}{\bar{u}}
\newcommand{\Id}{\mathrm{Id}}
\newcommand{\bfu}{\mathbf{u}}
\newcommand{\bfy}{\mathbf{y}}
\newcommand{\bfv}{\mathbf{v}}
\newtheorem{theorem}{Theorem}[section]
\newtheorem{proposition}[theorem]{Proposition}
\newtheorem{lemma}[theorem]{Lemma}
\theoremstyle{remark}
\newtheorem{remark}[theorem]{Remark}
\newtheorem{example}[theorem]{Example}
\theoremstyle{definition}
\newtheorem{definition}[theorem]{Definition}
\renewcommand{\gg}[2]{\fill[color=white] (#2) circle(3mm) node {\color{black}$#1$}; \draw (#2) circle (3mm)}
\newcommand{\lab}[4]{\draw (#1)++(#2:#3) node {$#4$};}
\newcommand{\leg}[2]{\begin{scope}[shift={(#1)}] \draw (0:0) -- (#2:8mm);\end{scope}}
\numberwithin{equation}{section}
\begin{document}

\tikz{\coordinate (A) at (0,0);\coordinate (B) at (0,14mm);\coordinate (BL) at (-12mm,14mm);\coordinate (BR) at (12mm,14mm);}

\title[Reciprocal transformations and DR hierarchies]{An application of the nonlinear reciprocal transformations in the theory of DR hierarchies}

\author{Alexandr Buryak}
\address{A. Buryak:\newline 
Faculty of Mathematics, National Research University Higher School of Economics, \newline
6 Usacheva str., Moscow, 119048, Russian Federation;\smallskip\newline 
Center for Advanced Studies, Skolkovo Institute of Science and Technology, \newline
1 Nobel str., Moscow, 143026, Russian Federation}
\email{aburyak@hse.ru}

\author{Mikhail Troshkin}
\address{M. Troshkin:\newline 
Faculty of Mathematics, National Research University Higher School of Economics, \newline
6 Usacheva str., Moscow, 119048, Russian Federation}
\email{tr2mikh@gmail.com}

\begin{abstract}
We prove that the DR hierarchy corresponding to the family of F-cohomological field theories without unit considered in a previous work of the first author together with D. Gubarevich can be ``trivialized'', i.e. reduced to two copies of the KdV hierarchy, using a simple nonlinear reciprocal transformation. This gives the first manifestation of a role of nonlinear reciprocal transformation in the theory of integrable systems associated to the moduli spaces of stable curves, beyond the dispersionless limit.
\end{abstract}

\date{\today}

\maketitle

\section{Introduction}

There are various ways to produce integrable hierarchies of evolutionary PDEs using the geometry of the moduli space~$\oM_{g,n}$ of stable algebraic curves of genus $g$ with $n$ marked points. The central role here is played by the notion of a \emph{cohomological field theory (CohFT)} introduced by Kontsevich and Manin~\cite{KM94}. CohFTs are systems of cohomology classes on the moduli spaces $\oM_{g,n}$ that are compatible with natural maps between the moduli spaces. The notion of a CohFT involves also a vector space called the \emph{phase space}, a bilinear form on it called the \emph{metric}, and a special vector in the phase space called the \emph{unit}. 

\medskip

One way to produce an integrable hierarchy from a CohFT was proposed by Dubrovin and Zhang~\cite{DZ01} (for homogeneous semisimple CohFTs) and then generalized in~\cite{BPS12} (for general semisimple CohFTs). It is understood now~\cite{BS22} that a generalization of the Dubrovin--Zhang hierarchy exists for an object that is more general than a CohFT, for a so-called \emph{F-CohFT}, introduced in~\cite{BR21}, where the is no metric and there are less requirements regarding the compatibility with natural maps between the moduli spaces. All these more general hierarchies will be also called the \emph{Dubrovin--Zhang (DZ) hierarchies}. Note, however, that the polynomiality property of the DZ hierarchy is proved only for semisimple CohFTs~\cite{BPS12} and in some concrete examples~\cite{BR21,Bur23}.

\medskip

The DZ hierarchies include many important hierarchies from mathematical physics, for example, the Gelfand--Dickey hierarchies, Toda hierarchies of various types, the ILW hierarchy, the Drinfeld--Sokolov hierarchies, the discrete KdV hierarchy. Certain subclasses in the class of DZ hierarchies can be conjecturally described independently of the geometry, using only the language of integrable systems~\cite{DZ01,DLYZ16,LWZ21}.

\medskip

There is another way to produce an integrable hierarchy starting from a CohFT, which was proposed by the first author in~\cite{Bur15}, the resulting hierarchy was called the \emph{DR hierarchy}. The hierarchy is polynomial by construction, and it is endowed with a remarkably rich algebraic structure, which can be described very explicitly. It is understood now that the DR hierarchy can be associated to an arbitrary F-CohFT~\cite{BR21} (a systematic study is presented in~\cite{ABLR21}), and it is again polynomial by construction, while the polynomiality of the DZ hierarchy for an arbitrary F-CohFT is an open problem. Conjecturally, the DR and DZ hierarchies are Miura equivalent: for CohFTs it was formulated in~\cite{Bur15}, then in a more precise form in~\cite{BDGR18}, and currently the most general version of the conjecture is formulated in~\cite{BS22}.

\medskip

Note that the DR hierarchy can be associated to a more general object than an F-CohFT, to an \emph{F-CohFT without unit}: the first explicit examples were computed in~\cite{BG23}. A construction of a DZ hierarchy associated to an F-CohFT without unit is not developed yet.

\medskip

Having a construction of a class of integrable hierarchies, it is important to understand whether some of them are related by changes of variables or other transformations. A lot of work was done regarding the action of Miura transformations on the DZ and DR hierarchies. However, there is another type of transformations, the so-called \emph{reciprocal transformations}, whose role in the relation to the DZ and DR hierarchies is less studied. The dispersionless parts of these two hierarchies are hierarchies of hydrodynamic type, and the reciprocal transformations of such systems are well studied (see the references at the end of page 1 in~\cite{LSV23}). However, the reciprocal transformations of dispersive deformations of the hierarchies of hydrodynamic type are much less studied~(see e.g.~\cite{LZ11,LWZ23,LSV23}), and regarding the DZ or DR hierarchies there is only one work~\cite{LWZ23}, where the authors consider the simplest possible reciprocal transformations, the so-called \emph{linear} ones.

\medskip

In our paper, we give the first application of the \emph{nonlinear} reciprocal transformations in the theory of DR hierarchies. We consider the $3$-parameter family of F-CohFTs without unit of rank~$2$ from the paper~\cite{BG23} and the associated DR hierarchy. In~\cite{BG23}, the authors computed explicitly the primary flows of the DR hierarchy. In our paper, we give an explicit description of all the flows of the DR hierarchy: we prove that after the composition of a Miura transformation and a nonlinear reciprocal transformation the two dependent variables of the hierarchy become splitted and the resulting flows can be simply described in terms of the flows of the KdV hierarchy (see Theorem~\ref{theorem:main}). 

\medskip

Note that in the theory of DR and DZ hierarchies the KdV hierarchy is considered as the simplest possible hierarchy, because it corresponds to the trivial CohFT where all the classes are just units in the cohomology. As far as we know, the fact that the DR or DZ hierarchy corresponding to some CohFT or F-CohFT with nontrivial $R$-matrix can be ``trivialized'', i.e. reduced to the KdV hierarchy, was never observed in the literature before. So our result is the first one of this sort, and thus we believe that it shows the importance of the role of nonlinear reciprocal transformations in the theory of DR and DZ hierarchies, which should be clarified in the future research. 

\medskip

\subsection*{Notations and conventions}  

\begin{itemize}

\item We use the standard convention of sum over repeated Greek indices.

\smallskip

\item When it doesn't lead to a confusion, we use the symbol $*$ to indicate any value, in the appropriate range, of a sub- or superscript.

\smallskip

\item For a topological space $X$, we denote by $H^i(X)$ the cohomology groups with the coefficients in $\mbC$. Let $H^{\even}(X):=\bigoplus_{i\ge 0}H^{2i}(X)$.

\end{itemize}

\medskip

\subsection*{Acknowledgements}

The work of A.~B. is an output of a research project implemented as part of the Basic Research Program at the National Research University Higher School of Economics (HSE University). 

\medskip


\section{A family of F-CohFTs without unit of rank $2$ and the associated DR hierarchy}\label{section:a family of F-CohFTs}

Here we recall the construction of the DR hierarchy associated to an F-CohFT without unit and the main result from~\cite{BG23}.

\medskip

\begin{definition}
An \emph{F-cohomological field theory without unit} (F-CohFT without unit) is a system of linear maps 
$$
c_{g,n+1}\colon V^*\otimes V^{\otimes n} \to H^\even(\oM_{g,n+1}),\quad 2g-1+n>0,
$$
where $V$ is an arbitrary finite dimensional vector space, such that the following axioms are satisfied.
\begin{enumerate}[(i)]
\item The maps $c_{g,n+1}$ are equivariant with respect to the $S_n$-action permuting the $n$ copies of~$V$ in $V^*\otimes V^{\otimes n}$ and the last $n$ marked points on curves from $\oM_{g,n+1}$, respectively.

\smallskip

\item Fixing a basis $e_1,\ldots,e_{\dim V}$ in $V$ and the dual basis $e^1,\ldots,e^{\dim V}$ in $V^*$, the following property holds:
$$
\gl^* c_{g_1+g_2,n_1+n_2+1}(e^{\alpha_0}\otimes\otimes_{i=1}^{n_1+n_2} e_{\alpha_i}) = c_{g_1,n_1+2}(e^{\alpha_0}\otimes \otimes_{i\in I} e_{\alpha_i} \otimes e_\mu)\otimes c_{g_2,n_2+1}(e^{\mu}\otimes \otimes_{j\in J} e_{\alpha_j})
$$
for $1 \leq\alpha_0,\alpha_1,\ldots,\alpha_{n_1+n_2}\leq \dim V$, where $I \sqcup J = \{2,\ldots,n_1+n_2+1\}$, $|I|=n_1$, $|J|=n_2$, and $\gl\colon\oM_{g_1,n_1+2}\times\oM_{g_2,n_2+1}\to \oM_{g_1+g_2,n_1+n_2+1}$ is the corresponding gluing map. Clearly the axiom doesn't depend on the choice of a basis in $V$.
\end{enumerate}
The dimension of $V$ is called the \emph{rank} of the F-CohFT without unit.
\end{definition}

\medskip

We will use the following standard cohomology classes on $\oM_{g,n}$:
\begin{itemize}
\item The psi-class $\psi_i\in H^2(\oM_{g,n})$, $1\le i\le n$, is the first Chern class of the line bundle over~$\oM_{g,n}$ formed by the cotangent lines at the $i$-th marked point of stable curves.

\smallskip

\item The Hodge class $\lambda_j:= c_j(\mathbb E)\in H^{2j}(\oM_{g,n})$, $j\ge 0$, where~$\mathbb E$ is the rank~$g$ Hodge vector bundle over~$\oM_{g,n}$ whose fibers are the spaces of holomorphic one-forms on stable curves. 

\medskip

\item The \emph{double ramification (DR) cycle} $\DR_g(a_1,\ldots,a_n)\in H^{2g}(\oM_{g,n})$, $a_1,\ldots,a_n\in\mbZ$, $\sum a_i=0$, is defined as follows. There is a moduli space of projectivized stable maps to~$\CP^1$ relative to~$0$ and~$\infty$, with ramification profile over $0$ given by the negative numbers among the $a_i$-s, ramification profile over $\infty$ given by the positive numbers among the $a_i$-s, and the zeros among the $a_i$-s correspond to additional marked points (see, e.g.,~\cite{BSSZ15} for more details). This moduli space is endowed with a virtual fundamental class, which lies in homology of degree $2(2g-3+n)$. The DR cycle $\DR_g(a_1,\ldots,a_n)$ is the Poincar\'e dual to the pushforward, through the forgetful map to~$\oM_{g,n}$, of this virtual fundamental class. The crucial property of the DR cycle is that for any cohomology class $\theta\in H^*(\oM_{g,n})$ the integral $\int_{\oM_{g,n+1}}\lambda_g\DR_g\left(-\sum a_i,a_1,\ldots,a_n\right)\theta$ is a homogeneous polynomial in $a_1,\ldots,a_n$ of degree~$2g$ (see, e.g.,~\cite{Bur15}). 
\end{itemize}

\medskip

Let us briefly recall main notions and notations in the formal theory of evolutionary PDEs with one spatial variable:
\begin{itemize}
\item We fix an integer $N\ge 1$ and consider formal variables $u^1,\ldots,u^N$. To the formal variables $u^\alpha$ we attach formal variables $u^\alpha_d$ with $d\ge 0$ and introduce the algebra of \emph{differential polynomials} $\hcA_u:=\mbC[[u^*_0]][u^*_{\ge 1}][[\eps]]$. We identify $u^\alpha_0=u^\alpha$ and also denote $u^\alpha_x:=u^\alpha_1$, $u^{\alpha}_{xx}:=u^\alpha_2$, \ldots. Denote by $\hcA_{u;d}\subset\hcA_u$ the homogeneous component of (differential) degree $d$, where $\deg u^\alpha_i:=i$ and $\deg\eps:=-1$. 

\smallskip

\item An operator $\d_x\colon\hcA_u\to\hcA_u$ is defined by $\d_x:=\sum_{d\ge 0}u^\alpha_{d+1}\frac{\d}{\d u^\alpha_d}$.

\smallskip

\item An operator $H\colon\hcA_u\to\hcA_u$ is called \emph{evolutionary}, if $H$ satisfies the Leibniz rule and commutes with $\d_x$. An operator $H$ is evolutionary if and only if it has the form $H=H_{\oP}:=\sum_{n\ge 0}(\d_x^n P^\alpha)\frac{\d}{\d u^\alpha_n}$ for some $\oP=(P^1,\ldots,P^N)\in \hcA_u^N$. If the differential polynomials $P^\alpha$ satisfy the condition $P^\alpha|_{u^*_*=0}=0$, then we will write $H|_{u^*_*=0}=0$.

\smallskip

\item We assign to an evolutionary operator $H_{\oP}$ the system of \emph{evolutionary PDEs} (with one spatial variable) $\frac{\d u^\alpha}{\d t}=P^\alpha$, $1\le\alpha\le N$. Two such systems are said to be \emph{compatible} if the corresponding evolutionary operators commute.

\smallskip

\item An element $f\in\hcA_u$ is called a \emph{conservation law} for an evolutionary operator $H$ (or for the corresponding system of evolutionary PDEs) if there exists an element $R\in\hcA_u$ such that $H(f)=\d_x R$.

\smallskip

\item A \emph{Miura transformation} is a change of variables $u^\alpha\mapsto \tu^\alpha(u^*_*,\eps)$ of the form $\tu^\alpha(u^*_*,\eps)=g^\alpha(u^*_0)+\eps f^\alpha(u^*_*,\eps)$, where $f^\alpha\in\hcA_{u;1}$ and $g^\alpha\in\mbC[[u^*_0]]$ satisfy $g^\alpha|_{u^*_0=0}=0$ and $\left.\det\left(\frac{\d g^\alpha}{\d u^\beta_0}\right)\right|_{u^*_0=0}\ne 0$. 
\end{itemize}

\medskip

\begin{lemma}\label{lemma:uniqueness lemma}
Consider the case $N=1$ and denote $u_n:=u^1_n$. Consider a pair of compatible PDEs 
$$
\left\{
\begin{aligned}
&\frac{\d w}{\d t}=P,\\
&\frac{\d w}{\d s}=Q,
\end{aligned}
\right.
$$
where $P=uu_x+O(\eps)\in\hcA_{u;1}$, $Q=f(u)u_x+O(\eps)\in\hcA_{u;1}$, and $f\in\mbC[[u]]$. Then the differential polynomial $Q$ is uniquely determined by $P$ and $f$.
\end{lemma}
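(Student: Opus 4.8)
The plan is to reconstruct $Q$ order by order in $\eps$ and to reduce the whole uniqueness assertion to the injectivity of a single linear operator. Compatibility of the two flows means that the evolutionary operators $H_P$ and $H_Q$ commute, which is equivalent to the identity $H_P(Q)=H_Q(P)$ in $\hcA_u$. Write $P=\sum_{k\ge 0}\eps^k P^{[k]}$ and $Q=\sum_{k\ge 0}\eps^k Q^{[k]}$ with $\eps$-free differential polynomials $P^{[k]},Q^{[k]}$; the grading forces each $P^{[k]},Q^{[k]}$ to be homogeneous of (differential) degree $k+1$, and in particular $P^{[0]}=uu_x$ and $Q^{[0]}=f(u)u_x$ are prescribed by the hypotheses. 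Since $\d_x$ carries no $\eps$, the coefficient of $\eps^m$ in $H_P(Q)=H_Q(P)$ is $\sum_{a+b=m}\big(H_{P^{[a]}}(Q^{[b]})-H_{Q^{[b]}}(P^{[a]})\big)$, which must vanish, and separating the $a=0$ term gives
\begin{multline*}
L(Q^{[m]}):=H_{uu_x}(Q^{[m]})-H_{Q^{[m]}}(uu_x)\\
=-\sum_{a=1}^{m}\big(H_{P^{[a]}}(Q^{[m-a]})-H_{Q^{[m-a]}}(P^{[a]})\big),
\end{multline*}
whose right-hand side involves only $P$ and $Q^{[0]},\dots,Q^{[m-1]}$.

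I would then argue by induction on $m$. The order $m=0$ identity $H_{uu_x}(Q^{[0]})=H_{Q^{[0]}}(uu_x)$ holds automatically, since for $N=1$ any two flows $g(u)u_x$ and $f(u)u_x$ of hydrodynamic type commute; this seeds the induction with the given datum $Q^{[0]}=f(u)u_x$. Assuming $Q^{[0]},\dots,Q^{[m-1]}$ already determined by $P$ and $f$, the polynomial $Q^{[m]}$ is a solution of the linear equation $L(Q^{[m]})=(\text{known})$, and it is homogeneous of degree $m+1\ge 2$. Hence uniqueness of $Q$ follows at once from the claim that $L$ is injective on homogeneous $\eps$-free differential polynomials of degree $\ge 2$.

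Proving this injectivity is the main obstacle. Using $H_T(uu_x)=u_xT+u\,\d_x T$ one writes $L(T)=\mathcal D(T)-u_x T$, where $\mathcal D:=H_{uu_x}-u\,\d_x$ is the derivation of $\hcA_u$ determined by $\mathcal D(u)=0$ and $\mathcal D(u_k)=\d_x^k(uu_x)-uu_{k+1}$. The difficulty is that the naive leading term of $L(T)$ cancels: if $u_k$ is the highest derivative occurring in $T$, the terms containing $u_{k+1}$ cancel between $H_{uu_x}(T)$ and $u\,\d_x T$, so a one-step symbol argument on the highest derivative cannot conclude. To circumvent this I would run a finer induction on the order $k$ of the highest derivative. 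Extracting the coefficient of the top power $u_k^M$ (with $M\ge 1$) reduces the equation $\mathcal D(T)=c\,u_x T$ to one of the same shape, $\mathcal D(C)=c'\,u_x C$, where $C$ is a differential polynomial of strictly smaller top-derivative order and the scalar shifts to $c'=c-M(k+1)$; here the value $M(k+1)$ comes from the leading term $(k+1)u_x u_k$ of $\mathcal D(u_k)$.

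Tracking the scalar then finishes the argument. Our equation is the case $c=1$, and a degree $\ge 2$ solution must have $k\ge 2$ (the case $k=1$ is special, since there $\mathcal D$ raises the power of $u_x$, and it forces the degree down to $1$, which is excluded); consequently the first reduction yields $c'=1-M(k+1)\le -1$, and every subsequent reduction keeps the scalar $\le -1$. For such negative scalars the same extraction, together with the $k=1$ case and the base case of polynomials in $u$ alone (where $\mathcal D$ vanishes and $c'\ne 0$ forces $C=0$), leaves no nonzero leading coefficient, so $C=0$ at each stage and hence $T=0$. This shows $\ker L$ consists precisely of the degree-one elements $f(u)u_x$, giving injectivity in degrees $\ge 2$ and the desired uniqueness of $Q$.
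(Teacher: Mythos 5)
Your proof is correct and follows the same strategy as the argument the paper defers to (namely \cite[Lemma~4.14]{BR21}): expand in powers of $\eps$, use compatibility $H_P(Q)=H_Q(P)$ to get a linear equation $L(Q^{[m]})=(\text{known})$ for each coefficient with $L(T)=H_{uu_x}(T)-H_T(uu_x)$, and prove injectivity of $L$ in differential degrees $\ge 2$ by a leading-coefficient induction on the order of the highest derivative (correctly handling the cancellation of the naive top term and the shift of the scalar by $M(k+1)$). Since the paper itself omits the proof entirely, your write-up simply supplies the details it leaves to the reference.
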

\begin{proof}
This is a slight generalization of~\cite[Lemma~4.14]{BR21}, with the same proof, so we omit~it.
\end{proof}

\medskip

Consider now an arbitrary F-CohFT without unit of rank $N$ and define differential polynomials $P^\alpha_{\beta,d}\in\hcA_u$, $1\le\alpha,\beta\le N$, $d\ge 0$, by
\begin{equation*}
P^\alpha_{\beta,d}:=\sum_{\substack{g,n\geq 0,\,2g+n>0\\k_1,\ldots,k_n\geq 0\\\sum_{j=1}^n k_j=2g}} \frac{\eps^{2g}}{n!} \Coef_{(a_1)^{k_1}\ldots(a_n)^{k_n}} \left(\int_{\DR_g(-\sum_{j=1}^n a_j,0,a_1,\ldots,a_n)} \hspace{-2.3cm}\lambda_g \psi_2^d c_{g,n+2}(e^\alpha\otimes e_\beta\otimes \otimes_{j=1}^n e_{\alpha_j}) \right)\prod_{j=1}^n u^{\alpha_j}_{k_j}.
\end{equation*}
The \emph{DR hierarchy} is the following system of evolutionary PDEs:
\begin{gather}\label{eq:DR hierarchy}
\frac{\d u^\alpha}{\d t^\beta_d}=\d_x P^\alpha_{\beta,d},\qquad 1\le \alpha,\beta\le N,\quad d\ge 0.
\end{gather}
All the equations of the DR hierarchy are pairwise compatible.

\medskip

\begin{example}\label{example:KdV}
Consider the trivial F-CohFT without unit given by $V=\mbC$, $e_1=1\in\mbC=V$, and
$$
c^\triv_{g,n+1}(e^1\otimes e_1^{\otimes n}):=1\in H^0(\oM_{g,n+1}).
$$
Then the corresponding DR hierarchy is the KdV hierarchy~\cite[Section~4.3.1]{Bur15} (we denote $u_d:=u^1_d$ and $t_d:=t^1_d$)
$$
\frac{\d u}{\d t_d}=\d_x P^\KdV_d,
$$
where 
$$
P^\KdV_0=u,\quad P^\KdV_1=\frac{u^2}{2}+\frac{\eps^2}{12}u_{xx},\quad P_2^\KdV=\frac{u^3}{6}+\frac{\eps^2}{24}(2u u_{xx}+u_x^2)+\frac{\eps^4}{240}u_{xxxx},
$$
and a general formula for $P^\KdV_d$ is
$$
\d_x P^\KdV_d=\frac{\eps^{2d+2}}{2(2d+1)!!}\left[\left(L^{d+\frac{1}{2}}\right)_+,L\right],\quad L=\d_x^2+2\eps^{-2}u,
$$
defining $\left.P^\KdV_d\right|_{u_*=0}:=0$. Note that
$$
P_d^{\KdV}=\frac{u^{d+1}}{(d+1)!}+O(\eps)\in\hcA_{u;0}.
$$
\end{example}

\medskip

In~\cite{BG23}, the authors considered the following family of F-CohFTs without unit, with phase space $V=\mbC^2$, parameterized by a vector $G=(G^1,G^2)\in\mbC^2$:
$$
c^{\triv,G}_{g,n+1}(e^{i_0}\otimes\otimes_{j=1}^n e_{i_j}):=
\begin{cases}
(G^{i_0})^g,&\text{if $i_0=i_1=\ldots=i_n$},\\
0,&\text{otherwise},
\end{cases}
$$
where $e_1,e_2$ is the standard basis of $\mbC^2$. Then the authors of~\cite{BG23} applied to this F-CohFT without unit the $R$-matrix $\Id+R_1z$ with 
\begin{gather*}
R_1=
\begin{pmatrix}
0 & \xi \\ 
0 & 0
\end{pmatrix},\quad\xi\in\mbC,
\end{gather*}
see the details about this action in~\cite{BG23}. The resulting F-CohFT without unit is denoted by $\left((\Id+R_1 z)c^{\triv,G}\right)_{g,n+1}$. In~\cite[Theorem~4.1]{BG23}, the authors considered the corresponding DR hierarchy and proved that after the Miura transformation 
\begin{gather*}
\tu^1=u^1+\xi\frac{(u^2)^2}{2}+\frac{\eps^2}{24}\d_x^2\left(\xi G^2 u^2+\frac{G^1}{1+\xi u^2}\right),\qquad \tu^2=u^2,
\end{gather*}
the flows $\frac{\d}{\d t^1_0}$ and $\frac{\d}{\d t^2_0}$ of the DR hierarchy become
\begin{align}
\frac{\d\tu^1}{\d t^1_0}=&\d_x\left[\frac{\tu^1}{1+\xi \tu^2}\right],\label{eq:primary flows1}\\
\frac{\d \tu^2}{\d t^1_0}=&0,\label{eq:primary flows2}\\
\frac{\d\tu^1}{\d t^2_0}=&\xi\d_x\left[\frac{\tu^1\tu^2}{1+\xi\tu^2}-\frac{1}{2}\frac{(\tu^1)^2}{(1+\xi\tu^2)^2}-\frac{\eps^2 G^1}{12}\left(\left(\left(\frac{\tu^1}{1+\xi\tu^2}\right)_x\frac{1}{1+\xi\tu^2}\right)_x\frac{1}{1+\xi\tu^2}\right)\right],\label{eq:primary flows3}\\
\frac{\d\tu^2}{\d t^2_0}=&\tu^2_x,\label{eq:primary flows4}
\end{align}
and moreover 
\begin{gather}\label{eq:equations for tu2}
\frac{\d\tu^2}{\d t^1_d}=0,\qquad \frac{\d\tu^2}{\d t^2_d}=\left.\d_x P_d^\KdV\right|_{u_n\mapsto\tu^2_n,\,\eps\mapsto\sqrt{G^2}\eps}.
\end{gather}

\medskip


\section{The reciprocal transformations and the main result}\label{section:reciprocal transformations and main result}

In this section we recall the definition of nonlinear reciprocal transformations and their main properties (see e.g.~\cite{LZ11}, however for completeness we present short proofs in the appendix).

\medskip

\begin{remark}
We will consider the nonlinear reciprocal transformations following the terminology of~\cite{LWZ23}. Since the reciprocal transformations of other types won't be considered, the adjective ``nonlinear'' will be omitted below.
\end{remark}

\medskip

Consider two $N$-tuples of variables $u^1,\ldots,u^N$ and $v^1,\ldots,v^N$, and the associated algebras of differential polynomials~$\hcA_u$ and $\hcA_v$. In the algebra $\hcA_v$, let us denote the spatial variable by~$y$, i.e. we denote $\d_y:=\sum_{n\ge 0}v^\alpha_{n+1}\frac{\d}{\d v^\alpha_n}$, and also $v^\alpha_y:=v^\alpha_1$, $v^\alpha_{yy}:=v^\alpha_2$, $\ldots$. Choose an element $f \in \hcA_{u;0}$ such that $f|_{u^*_*=0}=0$. Then the element $1+f\in\hcA_{u;0}$ is invertible. Define an algebra homomorphism $\Phi_f\colon\hcA_v\to\hcA_u$ by
\begin{gather*}
\Phi_f(P):=\left.P\right|_{v^\alpha_k\mapsto\left((1+f)^{-1}\d_x\right)^k(u^\alpha)},\quad P\in\hcA_v.
\end{gather*}
The homomorphism $\Phi_f$ is an isomorphism. It is called a \emph{reciprocal transformation}. We have
$$
(1+f)^{-1}\d_x\circ\Phi_f=\Phi_f\circ\d_y,
$$
and therefore under the isomorphism $\Phi_f\colon\hcA_v\to\hcA_u$ the operators $\d_y$ and $(1+f)^{-1}\d_x$ become identified.

\medskip

By abuse of notation, we will identify elements of $\hcA_v$ and their images under $\Phi_f$ in $\hcA_u$, as well as the operators $\d_y$ and $(1+f)^{-1}\d_x$. Note that under this identification evolutionary operators on $\hcA_v$ in general do not correspond to evolutionary operators on $\hcA_u$. However, suppose $f$ is a conservation law for an evolutionary operator $H$ on $\hcA_u$, so that $H(f) = \d_x R$ for some $R\in\hcA_u$. Then $H - R\d_y$ is an evolutionary operator on $\hcA_v$. Moreover, let us denote $H_1:=H$, $R_1:=R$, and suppose that $f$ is a conservation law for another evolutionary operator~$H_2$ on $\hcA_u$, $H_2(f) = \d_x R_2$, which commutes with $H_1$. Suppose also that $H_1|_{u^*_*=0}=H_2|_{u^*_*=0}=0$. Then the corresponding evolutionary operators $H_1 - R_1 \d_y$ and $H_2 - R_2 \d_y$ on $\hcA_v$ commute.

\medskip

We know that any evolutionary operator $H$ on $\hcA_u$ has the form $H=H_{\oP}$, where $\oP=(P^1,\ldots,P^N)\in\hcA_u^N$, and we assign to $H$ the system of PDEs 
\begin{gather}\label{eq:system of PDEs}
\frac{\d u^\alpha}{\d t}=P^\alpha,\quad 1\le\alpha\le N.
\end{gather}
Given a conservation law $f\in\hcA_{u;0}$ of $H$, $H(f)=\d_x R$, satisfying $f|_{u^*_*=0}=0$, we obtain the evolutionary operator $H-R\d_y$ on $\hcA_v$, to which we assign the system of PDEs 
\begin{gather}\label{eq:transformed system of PDEs}
\frac{\d v^\alpha}{\d t}=P^\alpha-R v^\alpha_y,\quad 1\le\alpha\le N.
\end{gather}
We will say that the system~\eqref{eq:transformed system of PDEs} is obtained from the system~\eqref{eq:system of PDEs} by the reciprocal transformation given by the conservation law $f$.

\medskip

\begin{remark}
Let us describe how the reciprocal transformations act on solutions of systems of PDEs. Consider a collection of pairwise commuting evolutionary operators $H_i$, $i\ge 1$, on~$\hcA_u$, $H_i=H_{\oP_i}$, and suppose that $\left.P^\alpha_i\right|_{u^*_*=0}=0$. Suppose that $f\in\hcA_{u;0}$ satisfying $f|_{u^*_*=0}=0$ is a common conservation law for the evolutionary operators $H_i$, $H_i f=\d_x R_i$. Consider a solution $(\bfu^1,\ldots,\bfu^N)\in\mbC[[x,t_*,\eps]]^N$, $\bfu^\alpha|_{x=t_*=0}=0$, of the system of PDEs
\begin{gather}\label{eq:system of PDEs-2}
\frac{\d u^\alpha}{\d t_i}=P_i^\alpha,\quad 1\le\alpha\le N,\, i\ge 1.
\end{gather}
Define a formal power series $\bfy\in\mbC[[x,t_*,\eps]]$ satisfying $\bfy|_{x=t_*=0}=0$ by the equation
$$
d\bfy=\left(1+f|_{u^\alpha_n=\d_x^n\bfu^\alpha}\right)dx+\sum_{i\ge 1}\left(R_i|_{u^\alpha_n=\d_x^n\bfu^\alpha}\right)dt_i.
$$
The $1$-form on the right-hand side is closed because $H_i(f)=\d_x f$ and $H_i(R_j)=H_j(R_i)$. The last equality is true because $\d_x(H_i(R_j) - H_j(R_i)) = H_i(H_j(f)) - H_j(H_i(f)) = 0$, which implies that $H_i(R_j) - H_j(R_i)\in\mbC[[\eps]]$. However, since $H_i|_{u^*_*=0}=H_j|_{u^*_*=0}=0$, we immediately obtain $H_i(R_j) - H_j(R_i)=0$. Let $\bfv^\alpha\in\mbC[[y,t_*,\eps]]$ be a unique formal power series satisfying 
$$
\bfv^\alpha|_{y=\bfy}=\bfu^\alpha.
$$ 
Then $(\bfv^1,\ldots,\bfv^N)$ is a solution of the system 
$$
\frac{\d v^\alpha}{\d t_i}=P^\alpha_i-R_i v^\alpha_y,\quad 1\le\alpha\le N,\,i\ge 1,
$$
which is obtained from the system~\eqref{eq:system of PDEs-2} by the reciprocal transformation given by the common conservation law $f$ of the evolutionary operators $H_i$.
\end{remark}

\medskip

\begin{example}
Consider the KdV hierarchy
$$
\frac{\d u}{\d t_d}=\d_x P_d^{\KdV},\quad d\ge 0.
$$
We see that $u$ is a common conservation law of the flows of the KdV hierarchy. So for any $\xi\in\mbC$ we have the reciprocal transformation of the KdV hierarchy given by the conservation law $\xi u$:
$$
\frac{\d v}{\d t_d}=\underbrace{\d_x P_d^{\KdV}-\xi P_d^{\KdV}v_y}_{=:Q_d^{\xi-\KdV}\in\hcA_v},\quad d\ge 0.
$$
Since $u^2$ is a conservation law of the KdV hierarchy, we obtain $u\d_x P_d^{\KdV}\in\Im(\d_x)$, which implies that $(1+\xi u)\d_x P_d^{\KdV}-\xi P_d^{\KdV}u_x\in\Im(\d_x)$, and therefore $Q_d^{\xi-\KdV}\in\Im(\d_y)$. Thus, the reciprocal transformation of the KdV hierarchy given by the conservation law $\xi u$ has the form 
$$
\frac{\d v}{\d t_d}=\d_y P_d^{\xi-\KdV},\quad d\ge 0,
$$ 
where $\left.P_d^{\xi-\KdV}\right|_{v_*=0}=0$. For example, 
$$
P_0^{\xi-\KdV}=v,\qquad P_1^{\xi-\KdV}=\frac{v^2}{2}+\xi\frac{v^3}{6}+\frac{\eps^2}{12}(1+\xi v)^3v_{yy}.
$$
\end{example}

\medskip

Consider again the DR hierarchy associated to the F-CohFT without unit $\left((\Id+R_1 z)c^{\triv,G}\right)_{g,n+1}$ from Section~\ref{section:a family of F-CohFTs}. Since $P^2_{1,d}=0$ and $P^2_{2,d}$ doesn't depend on $u^1_n$ for any $n\ge 0$, after any Miura transformation that doesn't change the variable $u^2$, this variable will be a conservation law for the transformed hierarchy.

\medskip

\begin{theorem}\label{theorem:main}
Consider the F-CohFT without unit $\left((\Id+R_1 z)c^{\triv,G}\right)_{g,n+1}$ and the associated DR hierarchy. Then the composition of the Miura transformation
\begin{gather}\label{eq:main Miura transformation}
\hu^1=\frac{1}{1+\xi u^2}\left(u^1+\xi\frac{(u^2)^2}{2}+\frac{\eps^2}{24}\d_x^2\left(\xi G^2 u^2+\frac{G^1}{1+\xi u^2}\right)\right),\qquad \hu^2=u^2,
\end{gather}
and the reciprocal transformation given by the conservation law $\xi\hu^2$ transforms the DR hierarchy to the system
\begin{align*}
&\frac{\d v^1}{\d t^1_d}=\left.\left(\d_x P_d^{\KdV}\right)\right|_{u_l\mapsto v^1_l,\,\eps\mapsto\sqrt{G^1}\eps}, && \frac{\d v^2}{\d t^1_d}=0,\\
&\frac{\d v^1}{\d t^2_d}=-\left.\left(\xi\d_x P_{d+1}^{\KdV}\right)\right|_{u_l\mapsto v^1_l,\,\eps\mapsto\sqrt{G^1}\eps}, && \frac{\d v^2}{\d t^2_d}=\left.\left(\d_y P_d^{\xi-\KdV}\right)\right|_{v_l\mapsto v^2_l,\,\eps\mapsto\sqrt{G^2}\eps}.
\end{align*}
\end{theorem}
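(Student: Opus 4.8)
The plan is to reduce everything to the explicit primary flows of~\cite{BG23} together with the uniqueness Lemma~\ref{lemma:uniqueness lemma}. First I would note that the Miura transformation~\eqref{eq:main Miura transformation} is the composition of the Miura transformation of~\cite[Theorem~4.1]{BG23} with the substitution $\hu^1=\tu^1/(1+\xi\tu^2)$, $\hu^2=\tu^2$, so that all the data \eqref{eq:primary flows1}--\eqref{eq:primary flows4} and \eqref{eq:equations for tu2} become available after a harmless rewriting. Since $\hu^2=u^2$ is a conservation law of every flow, I would first record the reciprocal densities $R^\beta_d$ defined by $\xi\,\frac{\d\hu^2}{\d t^\beta_d}=\d_x R^\beta_d$: from \eqref{eq:equations for tu2} one gets $R^1_d=0$ and $R^2_d=\xi\,P_d^{\KdV}\big|_{u_n\mapsto\hu^2_n,\,\eps\mapsto\sqrt{G^2}\eps}$. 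The $v^2$-equations are then immediate: $\frac{\d v^2}{\d t^1_d}=\frac{\d\hu^2}{\d t^1_d}-R^1_d v^2_y=0$, while $\frac{\d v^2}{\d t^2_d}$ is exactly the reciprocal transformation of the KdV hierarchy in $\hu^2$ by the conservation law $\xi\hu^2$, which by the example of the reciprocal transformation of the KdV hierarchy given above equals $\d_y P_d^{\xi-\KdV}\big|_{v_l\mapsto v^2_l,\,\eps\mapsto\sqrt{G^2}\eps}$ (the rescaling $\eps\mapsto\sqrt{G^2}\eps$ commutes with the reciprocal transformation because $\xi\hu^2$ is $\eps$-independent). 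This disposes of the entire $v^2$-part for all $d$.

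Next I would transform the two primary flows of $\hu^1$. Writing $w=\hu^1$ and using the identification $(1+\xi\hu^2)^{-1}\d_x=\d_y$, I would apply the chain rule $\frac{\d\hu^1}{\d t^\beta_0}=\frac{1}{1+\xi\tu^2}\frac{\d\tu^1}{\d t^\beta_0}-\frac{\xi w}{1+\xi\tu^2}\frac{\d\tu^2}{\d t^\beta_0}$ to \eqref{eq:primary flows1} and \eqref{eq:primary flows3}--\eqref{eq:primary flows4} and subtract the reciprocal correction $R^\beta_0 v^1_y$. The key point is a cancellation: the $v^2$-dependent terms produced by the chain rule and by $R^2_0=\xi\hu^2$ cancel, leaving
\begin{gather*}
\frac{\d v^1}{\d t^1_0}=v^1_y,\qquad \frac{\d v^1}{\d t^2_0}=-\xi\,\d_y\!\left(P_1^{\KdV}\big|_{u_l\mapsto v^1_l,\,\eps\mapsto\sqrt{G^1}\eps}\right),
\end{gather*}
which are precisely the $d=0$ instances of the claim. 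I would double-check the dispersive term using that $\big(\big(w_x(1+\xi\hu^2)^{-1}\big)_x(1+\xi\hu^2)^{-1}\big)=((1+\xi\hu^2)^{-1}\d_x)^2 w$ becomes $v^1_{yy}$ under the identification.

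These two computations already contain the mechanism that trivializes the hierarchy. Since the $t^1_0$-flow is $\frac{\d v^1}{\d t^1_0}=v^1_y$, $\frac{\d v^2}{\d t^1_0}=0$, its evolutionary operator is $H^1_0=\sum_n v^1_{n+1}\frac{\d}{\d v^1_n}$, and compatibility of every flow $\frac{\d v^\alpha}{\d t^\beta_d}=Q^\alpha_{\beta,d}$ with $H^1_0$ forces $\sum_n v^2_{n+1}\frac{\d Q^1_{\beta,d}}{\d v^2_n}=0$ and $\sum_n v^1_{n+1}\frac{\d Q^2_{\beta,d}}{\d v^1_n}=0$; that is, $Q^1_{\beta,d}$ depends only on $v^1$ and $Q^2_{\beta,d}$ only on $v^2$, so the system decouples. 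The $v^1$-subsystem is thus a single-field hierarchy of pairwise commuting flows containing the explicit flow $K_1:=-\tfrac1\xi Q^1_{2,0}=v^1v^1_y+O(\eps)$. By Lemma~\ref{lemma:uniqueness lemma}, any flow commuting with $K_1$ is uniquely determined by $K_1$ and its own dispersionless leading term $f(v^1)v^1_y$. Since the rescaled KdV flows $K_d:=\d_y\big(P_d^{\KdV}\big|_{u_l\mapsto v^1_l,\,\eps\mapsto\sqrt{G^1}\eps}\big)$ and $-\xi K_{d+1}$ commute with $K_1$ and have leading terms $\tfrac{(v^1)^d}{d!}v^1_y$ and $-\xi\tfrac{(v^1)^{d+1}}{(d+1)!}v^1_y$, matching the claim reduces to showing that the dispersionless leading terms of $Q^1_{1,d}$ and $Q^1_{2,d}$ equal these.

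The main obstacle is therefore the computation of the dispersionless limits of the higher $v^1$-flows, since for a single field commutativity is automatic and imposes no constraint whatsoever. Here I would pass to $\eps=0$, using that this limit commutes with the Miura and reciprocal transformations (the conservation law $\xi\hu^2$ is $\eps$-independent): the dispersionless transformed $v^1$-flows are the classical reciprocal transform of the genus-zero DR hierarchy of $(\Id+R_1z)c^{\triv}$, whose genus-zero correlators are computable in closed form because $R_1$ is nilpotent of order two. Equivalently, the genus-zero descendent recursion yields $\frac{\d}{\d v^1}f_{\beta,d+1}=f_{\beta,d}$ in the decoupled variable, so that from $f_{1,0}=1$ and $f_{2,0}=-\xi v^1$ one obtains $f_{1,d}(v^1)=\tfrac{(v^1)^d}{d!}$ and $f_{2,d}(v^1)=-\xi\tfrac{(v^1)^{d+1}}{(d+1)!}$. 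Feeding these leading terms into Lemma~\ref{lemma:uniqueness lemma} gives $Q^1_{1,d}=K_d$ and $Q^1_{2,d}=-\xi K_{d+1}$ for all $d$, which is exactly the asserted $v^1$-part; the degenerate case $\xi=0$ then follows by continuity, both sides being power series in $\xi$.
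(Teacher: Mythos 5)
Your proposal follows essentially the same route as the paper's proof: the $v^2$-equations are read off from \eqref{eq:equations for tu2} and the example of the reciprocal transformation of the KdV hierarchy, the $d=0$ flows of $v^1$ are computed directly from \eqref{eq:primary flows1}--\eqref{eq:primary flows4}, commutativity with $\d/\d t^1_0$ shows the higher flows are independent of the $v^2_n$, and Lemma~\ref{lemma:uniqueness lemma} then identifies them with the rescaled KdV flows from their dispersionless leading terms. The one place where you are thinner than the paper is the determination of those leading terms for general $d$ (the content of Lemma~\ref{lemma:dispersionless part} and Proposition~\ref{proposition:dispersionless part}): the decoupled recursion $\d f_{\beta,d+1}/\d v^1=f_{\beta,d}$ you invoke is correct, but it is asserted rather than derived, whereas the genus-zero recursion \eqref{eq:relations for genus 0} genuinely mixes the two variables (the entry $P^1_{2,d}$ satisfies $\d P^1_{2,d}/\d\bu^1=P^1_{2,d-1}+\xi(\bu^2-\bu^1)(\bu^2)^d/d!$), and the $\bu^2$-dependence only cancels after subtracting the reciprocal correction $R^2_d\,v^1_y$; this step should be carried out explicitly, as the paper does, together with fixing the integration constants via the normalization $P^{[0]}_d|_{u^*=0}=0$.
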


\medskip


\section{Proof of Theorem~\ref{theorem:main}}

\subsection{Computation of the dispersionless part of the hierarchy}

Let us recall how to compute the dispersionless part, i.e. when $\eps=0$, of the DR hierarchy associated to an arbitrary F-CohFT without unit $\{c_{g,n+1}\colon V^*\otimes V^{\otimes n}\to H^\even(\oM_{g,n+1})\}$. For this, define an $N$-tuple of formal power series $(F^1,\ldots,F^N)\in\mbC[[u^1,\ldots,u^N]]^N$ by
$$
F^\alpha:=\sum_{n\ge 2}\frac{1}{n!}\sum_{1\le\alpha_1,\ldots,\alpha_n\le N}\left(\int_{\oM_{0,n+1}}c_{0,n+1}(e^\alpha\otimes \otimes_{j=1}^n e_{\alpha_j})\right)\prod_{j=1}^n u^{\alpha_j}.
$$
Let $c^\alpha_{\beta\gamma} = \frac{\d^2 F^\alpha}{\d u^\beta \d u^\gamma}$ and define $N\times N$ matrices $C_{\gamma}:=(c_{\beta\gamma}^\alpha)_{1\le\alpha,\beta\le N}$. Then the $N\times N$ matrices $P^{[0]}_d:=(P^\alpha_{\beta,d}|_{\eps=0})_{1\le\alpha,\beta\le N}$ are uniquely determined by the relations
\begin{gather}\label{eq:relations for genus 0}
\frac{\d P^{[0]}_{d}}{\d u^\gamma} = C_\gamma \cdot P^{[0]}_{d-1}, \quad \left.P^{[0]}_d\right|_{u^*=0}=0,\quad d\ge 0,\,\, 1\le\gamma\le N,
\end{gather}
where $P_{-1}^{[0]}:=\Id$.

\medskip

Let us return to our F-CohFT without unit $\left((\Id+R_1 z)c^{\triv,G}\right)_{g,n+1}$.
\begin{lemma}
We have
$$
F^1(u^1,u^2) = \frac{\frac{(u^1)^2}{2} +  u^1 \cdot \frac{\xi (u^2)^2}{2} - \frac{\xi (u^2)^3}{24}(4+\xi u^2)}{1+\xi u^2}, \qquad F^2(u^1,u^2) = \frac{(u^2)^2}{2}.
$$
\end{lemma}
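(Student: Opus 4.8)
The plan is to reconstruct $F^1$ and $F^2$ from the genus-$0$ (i.e.\ $\eps=0$) densities of the primary flows of the DR hierarchy. Taking $d=0$ in the recursion~\eqref{eq:relations for genus 0} and using $P^{[0]}_{-1}=\Id$ gives $\frac{\d}{\d u^\gamma}(P^\alpha_{\beta,0}|_{\eps=0})=\frac{\d^2 F^\alpha}{\d u^\beta\d u^\gamma}$; since both $P^\alpha_{\beta,0}|_{\eps=0}$ and $\frac{\d F^\alpha}{\d u^\beta}$ are power series in $u^1,u^2$ vanishing at the origin, this yields the identity $P^\alpha_{\beta,0}|_{\eps=0}=\frac{\d F^\alpha}{\d u^\beta}$. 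Thus it suffices to compute the four genus-$0$ densities $P^\alpha_{\beta,0}|_{\eps=0}$ in the \emph{original} variables $u^1,u^2$ and then integrate.

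These densities I would extract from the already-known primary flows~\eqref{eq:primary flows1}--\eqref{eq:primary flows4}, which are written in the variables $\tu^\alpha$ of~\cite[Theorem~4.1]{BG23}. At $\eps=0$ that Miura transformation degenerates to the point transformation $\tu^1=u^1+\frac{\xi(u^2)^2}{2}$, $\tu^2=u^2$, with Jacobian $\left(\begin{smallmatrix}1&\xi u^2\\0&1\end{smallmatrix}\right)$. Inverting it by the chain rule gives $\frac{\d u^1}{\d t^\beta_0}=\frac{\d\tu^1}{\d t^\beta_0}-\xi u^2\frac{\d\tu^2}{\d t^\beta_0}$ and $\frac{\d u^2}{\d t^\beta_0}=\frac{\d\tu^2}{\d t^\beta_0}$, specialized at $\eps=0$. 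The flows~\eqref{eq:primary flows2} and~\eqref{eq:primary flows4} immediately give $P^2_{1,0}|_{\eps=0}=0$ and $P^2_{2,0}|_{\eps=0}=u^2$, while~\eqref{eq:primary flows1} gives $P^1_{1,0}|_{\eps=0}=\frac{u^1+\xi(u^2)^2/2}{1+\xi u^2}$. For $\beta=2$ one uses $u^2 u^2_x=\d_x\frac{(u^2)^2}{2}$ to rewrite the right-hand side as a single $\d_x$ of a degree-$0$ differential polynomial; since the only degree-$0$ differential polynomial in $\ker\d_x$ vanishing at the origin is $0$, this determines $P^1_{2,0}|_{\eps=0}$ unambiguously after substituting $\tu^1=u^1+\xi(u^2)^2/2$.

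Finally I would integrate. From $\frac{\d F^2}{\d u^1}=0$ and $\frac{\d F^2}{\d u^2}=u^2$ we get $F^2=\frac{(u^2)^2}{2}$. Integrating $\frac{\d F^1}{\d u^1}=\frac{u^1+\xi(u^2)^2/2}{1+\xi u^2}$ in $u^1$ gives $F^1=\frac{(u^1)^2/2+\xi(u^2)^2 u^1/2}{1+\xi u^2}+g(u^2)$, and the unknown $g$ is pinned down by matching $\frac{\d F^1}{\d u^2}$ against the expression for $P^1_{2,0}|_{\eps=0}$ found in the previous step; this produces $g(u^2)=-\frac{\xi(u^2)^3(4+\xi u^2)}{24(1+\xi u^2)}$, which is precisely the claimed formula. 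The main (and essentially only non-bookkeeping) obstacle is the second step: correctly transporting the primary flows from the $\tu$-variables back to the $u$-variables under the point transformation and then recognizing the resulting right-hand sides as $\d_x$-exact, so that the densities $P^\alpha_{\beta,0}|_{\eps=0}$ can be read off. The consistency $\frac{\d}{\d u^2}\bigl(P^1_{1,0}|_{\eps=0}\bigr)=\frac{\d}{\d u^1}\bigl(P^1_{2,0}|_{\eps=0}\bigr)$ needed for $F^1$ to exist is automatic from the commutativity of the two flows.
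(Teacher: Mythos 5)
Your argument is correct, but it proves the lemma by a genuinely different route than the paper. The paper computes $F^1,F^2$ directly from the definition of the F-CohFT $\left((\Id+R_1z)c^{\triv,G}\right)_{0,n+1}$: it expands the $R$-matrix action as a sum over stable trees, identifies the finitely many trees with nonzero integral (Figure~\ref{figure:graphs}), and sums their contributions. You instead work backwards from the DR hierarchy: you observe that the $d=0$ case of~\eqref{eq:relations for genus 0} (equivalently, the $g=0$, $k_*=0$ part of the definition of $P^\alpha_{\beta,d}$) gives $P^\alpha_{\beta,0}|_{\eps=0}=\frac{\d F^\alpha}{\d u^\beta}$, extract these densities from the primary flows~\eqref{eq:primary flows1}--\eqref{eq:primary flows4} of~\cite{BG23} by inverting the dispersionless limit $\tu^1=u^1+\xi(u^2)^2/2$, $\tu^2=u^2$ of the Miura transformation, and integrate. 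Each step checks out: the chain-rule transport of the flows is right, $u^2u^2_x=\d_x\frac{(u^2)^2}{2}$ lets you read off $P^1_{2,0}|_{\eps=0}$ unambiguously from the normalization at the origin, and the closedness condition needed to integrate is indeed automatic from commutativity. What your approach buys is that it avoids the stable-tree computation entirely and requires no knowledge of the $R$-matrix action formula. What it costs is logical economy: you derive a genus-$0$ statement from the full (all-genus, dispersive) primary flows of~\cite[Theorem~4.1]{BG23}, whose own proof presumably passes through exactly the genus-$0$ correlators you are trying to compute, so as a standalone derivation it is circular in spirit, though perfectly legitimate within this paper where those flows are quoted as established. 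Note also that the paper uses the lemma as input to compute $C_\gamma$ and then all of $P^{[0]}_d$ via~\eqref{eq:relations for genus 0}; your use of only the $d=0$ instance of that recursion together with the cited flows does not create a circularity there.
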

\begin{proof}
We proceed in the same way as in~\cite[proof of Theorem~4.1]{BG23}: in order to compute the integrals
\begin{gather*}
\int_{\oM_{0,n+1}}\left((\Id+R_1 z)c^{\triv,G}\right)_{0,n+1}(e^\alpha\otimes \otimes_{j=1}^n e_{\alpha_j}),
\end{gather*}
we express the class $\left((\Id+R_1 z)c^{\triv,G}\right)_{0,n+1}(e^\alpha\otimes \otimes_{j=1}^n e_{\alpha_j})$ as a sum over stable trees. All the trees that give a nontrivial contribution to the integral are depicted in Figure~\ref{figure:graphs}, where the first four trees contribute to $F^1(u^1,u^2)$, and the last one contributes to $F^2(u^1,u^2)$. The respective sums are
\begin{equation*}
F^1(u^1,u^2) = \frac{(u^1)^2/2}{1+\xi u^2} + \frac{u^1 \cdot \xi (u^2)^2/2}{1+\xi u^2} + \frac{\xi^2 (u^2)^4/8}{1+\xi u^2} - \frac{\xi (u^2)^3}{6}, \qquad F^2(u^1,u^2)=\frac{(u^2)^2}{2}.
\end{equation*}
\begin{figure}[t]
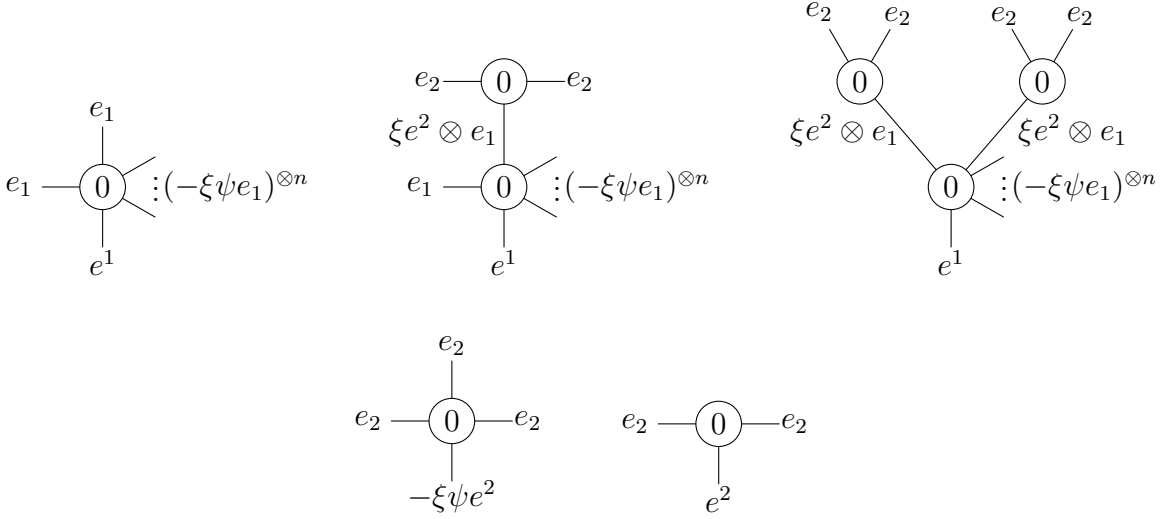

\tikz{
    \leg{A}{30};
    \leg{A}{-30};
    \leg{A}{-90};
    \leg{A}{180};
    \leg{A}{90};
    \gg{0}{A};
    \lab{A}{0}{7mm}{\raisebox{0.3\height}{\vdots}};
    \lab{A}{0}{17.5mm}{\raisebox{0.3\height}{$(-\xi \psi e_1)^{\otimes n}$}};
    \lab{A}{-90}{10mm}{e^1};
    \lab{A}{90}{10mm}{e_1};
    \lab{A}{180}{11mm}{e_1};
}
\hspace{0.5cm}
\tikz{
    \draw (A)--(B);
    \leg{A}{30};
    \leg{A}{-30};
    \leg{A}{-90};
    \leg{A}{180};
    \leg{B}{0};
    \leg{B}{180};
    \gg{0}{A};
    \gg{0}{B};
    \lab{A}{0}{7mm}{\raisebox{0.3\height}{\vdots}};
    \lab{A}{0}{17.5mm}{\raisebox{0.3\height}{$(-\xi \psi e_1)^{\otimes n}$}};
    \lab{A}{-90}{10mm}{e^1};
    \lab{B}{0}{10mm}{e_2};
    \lab{B}{180}{10mm}{e_2};
    \lab{A}{180}{11mm}{e_1};
    \node at (-8mm,7.5mm) {$\xi e^2\otimes e_1$};
}
\hspace{0.5cm}
\tikz{
    \draw (A)--(BL);
    \draw (A)--(BR);
    \leg{A}{30};
    \leg{A}{-30};
    \leg{A}{-90};
    \leg{BL}{120};
    \leg{BL}{60};
    \leg{BR}{120};
    \leg{BR}{60};
    \gg{0}{A};
    \gg{0}{BL};
    \gg{0}{BR};
    \lab{A}{0}{7mm}{\raisebox{0.3\height}{\vdots}};
    \lab{A}{0}{17.5mm}{\raisebox{0.3\height}{$(-\xi \psi e_1)^{\otimes n}$}};
    \lab{A}{-90}{10mm}{e^1};
    \lab{BL}{60}{10mm}{e_2};
    \lab{BL}{120}{10.7mm}{e_2};
    \lab{BR}{60}{10mm}{e_2};
    \lab{BR}{120}{10mm}{e_2};
    \node at (-14mm,7.5mm) {$\xi e^2\otimes e_1$};
    \node at (16mm,7.5mm) {$\xi e^2\otimes e_1$};
}
\\[0.5cm]
\tikz{
    \leg{A}{0};
    \leg{A}{-90};
    \leg{A}{180};
    \leg{A}{90};
    \gg{0}{A};
    \lab{A}{0}{10mm}{e_2};
    \lab{A}{-90}{10mm}{-\xi\psi e^2};
    \lab{A}{90}{10mm}{e_2};
    \lab{A}{180}{11mm}{e_2};
}
\hspace{0.5cm}
\tikz{
    \leg{A}{-90};
    \leg{A}{180};
    \leg{A}{0};
    \gg{0}{A};
    \lab{A}{-90}{10mm}{e^2};
    \lab{A}{0}{10mm}{e_2};
    \lab{A}{180}{11mm}{e_2};
}
\caption{Stable trees contributing to $(F^1,F^2)$}
\label{figure:graphs}
\end{figure}
\end{proof}

\begin{lemma}
\label{lemma:dispersionless part}
For any $d\ge 0$, we have
\begin{equation*}
P^{[0]}_d = \begin{pmatrix}\frac{(\bu^1)^{d+1}}{(d+1)!} && M_d \\ 0 && \frac{(\bu^2)^{d+1}}{(d+1)!}\end{pmatrix},
\end{equation*}
where
\begin{equation*}
\bu^1 := \hu^1|_{\eps=0} = \frac{u^1 + \frac{\xi (u^2)^2}{2}}{1 + \xi u^2},\hspace{0.2cm} \bu^2:=u^2,\hspace{0.2cm} M_d:=-\frac{\xi \left((\bu^1)^{d+2} - (d+2)\bu^1 (\bu^2)^{d+1} + (d+1)(\bu^2)^{d+2}\right)}{(d+2)!}.
\end{equation*}
\end{lemma}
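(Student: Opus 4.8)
The plan is to use the genus-$0$ recursion~\eqref{eq:relations for genus 0}, which together with the normalization $P^{[0]}_d|_{u^*=0}=0$ determines the matrices $P^{[0]}_d$ uniquely from the structure constants $C_1,C_2$. Accordingly, I would first compute $C_1$ and $C_2$ from the free energies $F^1,F^2$ of the previous lemma, then pass to the coordinates $\bu^1,\bu^2$ in which the answer is stated, and finally check the claimed formula against the recursion for all $d\ge 0$.

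First I would compute the structure constants $c^\alpha_{\beta\gamma}=\frac{\d^2 F^\alpha}{\d u^\beta\d u^\gamma}$. Since $F^2=\frac{(u^2)^2}{2}$, the only nonzero second derivative of $F^2$ is $c^2_{22}=1$. For $F^1$ the key observation is that $\frac{\d F^1}{\d u^1}=\bu^1$, which immediately gives $c^1_{11}=\frac{1}{1+\xi u^2}$ and $c^1_{12}=c^1_{21}=\frac{\d\bu^1}{\d u^2}=\frac{\xi(\bu^2-\bu^1)}{1+\xi\bu^2}$. The only laborious entry is $c^1_{22}$; differentiating $F^1$ twice in $u^2$ and re-expressing everything through $\bu^1,\bu^2$ (using $u^1=\bu^1(1+\xi\bu^2)-\frac{\xi(\bu^2)^2}{2}$, $u^2=\bu^2$), the numerator factors as $\xi(1+\xi\bu^2)^2(\bu^1-\bu^2)(1+\xi\bu^1)$, so that
$$
c^1_{22}=\frac{\xi(\bu^1-\bu^2)(1+\xi\bu^1)}{1+\xi\bu^2}.
$$
This yields $C_1=\frac{1}{1+\xi\bu^2}\left(\begin{smallmatrix}1 & \xi(\bu^2-\bu^1)\\ 0 & 0\end{smallmatrix}\right)$ and the corresponding $C_2$.

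Next I would rewrite the recursion in the coordinates $\bu^1,\bu^2$. Since $\bu^2=u^2$ and $\frac{\d\bu^1}{\d u^1}=\frac{1}{1+\xi\bu^2}$, we have $\frac{\d}{\d u^1}=\frac{1}{1+\xi\bu^2}\frac{\d}{\d\bu^1}$ and $\frac{\d}{\d u^2}=\frac{\xi(\bu^2-\bu^1)}{1+\xi\bu^2}\frac{\d}{\d\bu^1}+\frac{\d}{\d\bu^2}$. Substituting these into~\eqref{eq:relations for genus 0} and using the $\gamma=1$ relation to eliminate $\frac{\d}{\d\bu^1}P^{[0]}_d$ from the $\gamma=2$ relation, the system collapses — here the crucial simplification $c^1_{22}-\frac{\xi^2(\bu^2-\bu^1)^2}{1+\xi\bu^2}=\xi(\bu^1-\bu^2)$ occurs — to the two clean recursions
$$
\frac{\d P^{[0]}_d}{\d\bu^1}=\begin{pmatrix}1 & \xi(\bu^2-\bu^1)\\ 0 & 0\end{pmatrix}P^{[0]}_{d-1},\qquad \frac{\d P^{[0]}_d}{\d\bu^2}=\begin{pmatrix}0 & \xi(\bu^1-\bu^2)\\ 0 & 1\end{pmatrix}P^{[0]}_{d-1}.
$$

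Finally I would verify the claimed formula against these recursions, noting that it already reproduces $P^{[0]}_{-1}=\Id$ (with $M_{-1}=0$), so the verification can be run inductively with $P^{[0]}_{d-1}$ also in the claimed form. Plugging the claimed upper-triangular $P^{[0]}_{d-1}$ into the right-hand sides, the diagonal entries satisfy $\frac{\d}{\d\bu^1}\frac{(\bu^1)^{d+1}}{(d+1)!}=\frac{(\bu^1)^d}{d!}$ and $\frac{\d}{\d\bu^2}\frac{(\bu^2)^{d+1}}{(d+1)!}=\frac{(\bu^2)^d}{d!}$ with no mixing, i.e.\ they are two independent copies of the dispersionless KdV flow, while the off-diagonal entry must satisfy $\frac{\d M_d}{\d\bu^1}=M_{d-1}+\xi(\bu^2-\bu^1)\frac{(\bu^2)^d}{d!}$ and $\frac{\d M_d}{\d\bu^2}=\xi(\bu^1-\bu^2)\frac{(\bu^2)^d}{d!}$; a direct differentiation of the claimed $M_d$ confirms both, and homogeneity gives $M_d|_{\bu^*=0}=0$, so the stated uniqueness finishes the proof. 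The main obstacle is the single algebraic computation of $c^1_{22}$ and the ensuing simplification of the $\gamma=2$ recursion; everything afterwards is a short bookkeeping induction.
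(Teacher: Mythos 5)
Your proposal is correct and follows essentially the same route as the paper: compute $C_1,C_2$, pass to the coordinates $\bu^1,\bu^2$ via $\frac{\d}{\d u^1}=\frac{1}{1+\xi\bu^2}\frac{\d}{\d\bu^1}$ and $\frac{\d}{\d u^2}=\frac{\xi(\bu^2-\bu^1)}{1+\xi\bu^2}\frac{\d}{\d\bu^1}+\frac{\d}{\d\bu^2}$, and verify the recursion~\eqref{eq:relations for genus 0} directly. The paper compresses this into ``one directly checks the relations''; your write-up just makes the decoupled $\bu$-recursions and the inductive verification of $M_d$ explicit, and your intermediate formulas (including $\frac{\d F^1}{\d u^1}=\bu^1$ and the value of $c^1_{22}$) agree with the paper's.
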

\begin{proof}
We compute
\begin{equation}
C_1 = \begin{pmatrix}\frac{1}{1+\xi \bu^2} && \frac{\xi (\bu^2 - \bu^1)}{1+\xi \bu^2}\\ 0 && 0\end{pmatrix}, \qquad C_2 = \begin{pmatrix}  \frac{\xi (\bu^2 - \bu^1)}{1+\xi \bu^2} && \frac{\xi (\bu^1-\bu^2)(1 + \xi \bu^1)}{1+\xi \bu^2}\\ 0 && 1\end{pmatrix}.
\end{equation}
Then one directly checks the relations~\eqref{eq:relations for genus 0}, expressing 
\begin{equation*}
\frac{\d}{\d u^1}=\frac{1}{1+\xi \bu^2}\frac{\d}{\d \bu^1}, \qquad \frac{\d}{\d u^2}=\xi\frac{\bu^2-\bu^1}{1+\xi \bu^2}\frac{\d}{\d \bu^1}+\frac{\d}{\d \bu^2}.
\end{equation*}
\end{proof}

\medskip

\begin{proposition}\label{proposition:dispersionless part}
After the composition of the Miura transformation~\eqref{eq:main Miura transformation} and the reciprocal transformation given by the conservation law $\xi\hu^2$, the DR hierarchy of Theorem~\ref{theorem:main} has the form
\begin{align*}
&\frac{\d v^1}{\d t^1_d} = \d_y \left( \frac{(v^1)^{d+1}}{(d+1)!}\right)+O(\eps), && \frac{\d v^2}{\d t^1_d} = 0,\\
&\frac{\d v^1}{\d t^2_d} = -\xi \d_y \left(\frac{(v^1)^{d+2}}{(d+2)!}\right)+O(\eps), && \frac{\d v^2}{\d t^2_d} = \left.\left(\d_y P_d^{\xi-\KdV}\right)\right|_{v_l\mapsto v^2_l,\,\eps\mapsto\sqrt{G^2}\eps}.
\end{align*}
\end{proposition}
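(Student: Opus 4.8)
The plan is to handle the two components separately: the $v^2$-equations can be derived \emph{exactly}, to all orders in $\eps$, from the already known form of the $\hu^2$-flows, whereas the $v^1$-equations require only the genus-zero data of Lemma~\ref{lemma:dispersionless part}. In both cases the starting observation is the same. After the Miura transformation~\eqref{eq:main Miura transformation} the variable $\hu^2=u^2$ is a conservation law of every flow of the hierarchy, so the reciprocal transformation by the density $\xi\hu^2$ is controlled by differential polynomials $R^\beta_d$ determined by $\frac{\d(\xi\hu^2)}{\d t^\beta_d}=\d_x R^\beta_d$, and the transformed flows are $\frac{\d v^\alpha}{\d t^\beta_d}=\frac{\d\hu^\alpha}{\d t^\beta_d}-R^\beta_d v^\alpha_y$, read under the identifications $v^\alpha\leftrightarrow\hu^\alpha$ and $\d_y=(1+\xi\hu^2)^{-1}\d_x$.

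First I would dispose of the right-hand column. From~\eqref{eq:equations for tu2} the densities are immediate: $\frac{\d\hu^2}{\d t^1_d}=0$ forces $R^1_d=0$, hence $\frac{\d v^2}{\d t^1_d}=0$ identically; and $\frac{\d\hu^2}{\d t^2_d}=\left.\d_x P_d^\KdV\right|_{u_n\mapsto\hu^2_n,\,\eps\mapsto\sqrt{G^2}\eps}$ gives $R^2_d=\left.\xi P_d^\KdV\right|_{u_n\mapsto\hu^2_n,\,\eps\mapsto\sqrt{G^2}\eps}$. Thus the $\hu^2$-subsystem, together with the conservation law $\xi\hu^2$, is nothing but the KdV hierarchy with its conservation law $\xi u$, after the rescaling $\eps\mapsto\sqrt{G^2}\eps$. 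Applying verbatim the computation of the reciprocal transformation of KdV carried out in the Example of Section~\ref{section:reciprocal transformations and main result} then yields $\frac{\d v^2}{\d t^2_d}=\left.\d_y P_d^{\xi-\KdV}\right|_{v_l\mapsto v^2_l,\,\eps\mapsto\sqrt{G^2}\eps}$, again exactly.

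For the $v^1$-equations I only need the dispersionless flows of $\hu^1$, which at $\eps=0$ are those of $\bu^1$. Reading the Jacobian off the change-of-variables formulas in the proof of Lemma~\ref{lemma:dispersionless part}, namely $\frac{\d\bu^1}{\d u^1}=\frac{1}{1+\xi\bu^2}$ and $\frac{\d\bu^1}{\d u^2}=\xi\frac{\bu^2-\bu^1}{1+\xi\bu^2}$, and using that the dispersionless DR flow is $\left.\frac{\d u^\alpha}{\d t^\beta_d}\right|_{\eps=0}=\d_x(P^{[0]}_d)_{\alpha\beta}$ with $P^{[0]}_d$ given by Lemma~\ref{lemma:dispersionless part}, I obtain
\[
\left.\frac{\d\bu^1}{\d t^\beta_d}\right|_{\eps=0}=\frac{1}{1+\xi\bu^2}\d_x(P^{[0]}_d)_{1\beta}+\xi\frac{\bu^2-\bu^1}{1+\xi\bu^2}\d_x(P^{[0]}_d)_{2\beta}.
\]
For $\beta=1$ only the entry $(P^{[0]}_d)_{11}=\frac{(\bu^1)^{d+1}}{(d+1)!}$ contributes, and since $R^1_d=0$ and $(1+\xi\bu^2)^{-1}\d_x=\d_y$ this gives $\frac{\d v^1}{\d t^1_d}=\d_y\big(\frac{(v^1)^{d+1}}{(d+1)!}\big)+O(\eps)$ at once.

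The case $\beta=2$ is the only nontrivial point, and it is here that the off-diagonal entry $M_d$ and the reciprocal correction must conspire. After collecting the common factor $(1+\xi\bu^2)^{-1}$, the flow $\frac{\d v^1}{\d t^2_d}$ receives at $\eps=0$ three contributions: the term $\d_x M_d$, the cross-term $\xi(\bu^2-\bu^1)\d_x\big(\frac{(\bu^2)^{d+1}}{(d+1)!}\big)$ coming from $(P^{[0]}_d)_{22}$, and the reciprocal correction $-R^2_d v^1_y$, whose contribution inside the bracket at $\eps=0$ is $-\xi\frac{(\bu^2)^{d+1}}{(d+1)!}\d_x\bu^1$ (using $\left.P_d^\KdV\right|_{\eps=0}=\frac{u^{d+1}}{(d+1)!}$ from Example~\ref{example:KdV} and $v^1_y=\d_y\bu^1=(1+\xi\bu^2)^{-1}\d_x\bu^1$). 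The main obstacle is the telescoping identity for these three terms: writing $a=\bu^1$ and $b=\bu^2$, one verifies by a direct computation that
\[
\d_x M_d+\xi(b-a)\,\d_x\!\left(\frac{b^{d+1}}{(d+1)!}\right)-\xi\frac{b^{d+1}}{(d+1)!}\,\d_x a=-\xi\,\d_x\!\left(\frac{a^{d+2}}{(d+2)!}\right),
\]
all terms except $-\frac{\xi}{(d+1)!}a^{d+1}a_x$ cancelling pairwise. Dividing by $1+\xi\bu^2$ and rewriting with $\d_y$ then gives $\frac{\d v^1}{\d t^2_d}=-\xi\,\d_y\big(\frac{(v^1)^{d+2}}{(d+2)!}\big)+O(\eps)$, which would complete the proof of the proposition.
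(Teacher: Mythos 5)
Your proof is correct and follows essentially the same route as the paper: compute the dispersionless flows from Lemma~\ref{lemma:dispersionless part}, push them through the Miura transformation at $\eps=0$, apply the reciprocal transformation, and treat the $v^2$-column exactly via~\eqref{eq:equations for tu2} and the KdV example. The only cosmetic difference is that you combine $\d_x M_d$, the cross-term, and the reciprocal correction into a single telescoping identity, whereas the paper first simplifies the $\hu^1$-flow to $-\xi\hu^1_x\left((\hu^1)^{d+1}-(\hu^2)^{d+1}\right)/\left((d+1)!(1+\xi\hu^2)\right)$ and then subtracts the correction; both computations agree.
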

\begin{proof}
This is a direct computation. We apply the Miura transformation~\eqref{eq:main Miura transformation} to the dispersionless part of the DR hierarchy, computed in Lemma~\ref{lemma:dispersionless part}, and obtain
\begin{align*}
&\frac{\d\hu^1}{\d t^1_d}=\frac{1}{1+\xi\hu^2}\d_x\left(\frac{(\hu^1)^{d+1}}{(d+1)!}\right)+O(\eps), && \frac{\d\hu^2}{\d t^1_d} = 0,\\
&\frac{\d\hu^1}{\d t^2_d} = -\xi\frac{\hu^1_x \left((\hu^1)^{d+1} - (\hu^2)^{d+1}\right)}{(d+1)! (1+\xi \hu^2)}+O(\eps), && \frac{\d\hu^2}{\d t^2_d} = \left.\d_x P_d^\KdV\right|_{u_n\mapsto\hu^2_n,\,\eps\mapsto\sqrt{G^2}\eps}.
\end{align*}
One can easily see that after the reciprocal transformation given by the conservation law $\xi\hu^2$ the hierarchy has the desired form.
\end{proof}

\medskip

\subsection{The full hierarchy}

We are ready to finish the proof of Theorem~\ref{theorem:main}.

\medskip

If $\xi=0$, then our F-CohFT without unit coincides with $c^{\triv,G}_{g,n+1}$, and the theorem follows from Example~\ref{example:KdV}.

\medskip

Suppose that $\xi\ne 0$. We know that after the composition of the Miura transformation~\eqref{eq:main Miura transformation} and the reciprocal transformation given by the conservation law $\xi\hu^2$, the DR hierarchy has the form
\begin{align*}
&\frac{\d v^1}{\d t^1_d} = S_d, && \frac{\d v^2}{\d t^1_d} = 0,\\
&\frac{\d v^1}{\d t^2_d} = T_d, && \frac{\d v^2}{\d t^2_d} = \left.\left(\d_y P_d^{\xi-\KdV}\right)\right|_{v_l\mapsto v^2_l,\,\eps\mapsto\sqrt{G^2}\eps},
\end{align*}
where 
$$
S_d=\d_y \left( \frac{(v^1)^{d+1}}{(d+1)!}\right)+O(\eps)\in\hcA_{v;1},\qquad T_d=-\xi \d_y \left(\frac{(v^1)^{d+2}}{(d+2)!}\right)+O(\eps)\in\hcA_{v;1}.
$$
A direct computation using formulas~\eqref{eq:primary flows1}--\eqref{eq:primary flows4} gives that
$$
S_0=v^1_y,\qquad T_0=-\xi\d_y\left(\frac{(v^1)^2}{2}+\frac{\eps^2G^1}{12}v^1_{yy}\right).
$$
Let us prove that $S_d$ and $T_d$ don't depend on the variables~$v^2_n$. Indeed, the fact that the flows~$\frac{\d}{\d t^1_0}$ and~$\frac{\d}{\d t^1_d}$ commute gives that
$$
0=\frac{\d}{\d t^1_d}\frac{\d v^1}{\d t^1_0}-\frac{\d}{\d t^1_0}\frac{\d v^1}{\d t^1_d}=\d_y S_d-\sum_{n\ge 0}v^1_{n+1}\frac{\d S_d}{\d v^1_n}=\sum_{n\ge 0}v^2_{n+1}\frac{\d S_d}{\d v^2_n},
$$
which implies that $\frac{\d S_d}{\d v^2_n}=0$ for any $n$. In the same way, the commutativity of the flows~$\frac{\d}{\d t^1_0}$ and~$\frac{\d}{\d t^2_d}$ gives that $\frac{\d T_d}{\d v^2_n}=0$ for any $n$. Since the flows of the KdV hierarchy pairwise commute, Proposition~\ref{proposition:dispersionless part} and Lemma~\ref{lemma:uniqueness lemma} imply that $S_d=\left.\left(\d_x P_d^{\KdV}\right)\right|_{u_l\mapsto v^1_l,\,\eps\mapsto\sqrt{G^1}\eps}$ and $T_d=-\left.\left(\xi\d_x P_{d+1}^{\KdV}\right)\right|_{u_l\mapsto v^1_l,\,\eps\mapsto\sqrt{G^1}\eps}$. This completes the proof of Theorem~\ref{theorem:main}.

\medskip

{\appendix

\section{Reciprocal transformations}

Here, for completeness, we present short proofs of the properties of reciprocal transformations mentioned in Section~\ref{section:reciprocal transformations and main result}, see Parts~1 and~2 of Proposition~\ref{proposition:reciprocal transformations}. There are also Part 3 in Proposition~\ref{proposition:reciprocal transformations} and Remark~\ref{remark:action by conservation laws}, which we think are of independent interest, and which as far as we know didn't appear in the literature before.

\medskip 

\begin{proposition}\label{proposition:reciprocal transformations}
Suppose $f \in \hcA_{u;0}$ satisfying $f|_{u^*_*=0}=0$ is a conservation law for an evolutionary operator $H$ on~$\hcA_u$, so that $H(f) = \d_x R$ for some $R\in\hcA_u$. Consider the associated reciprocal transformation $\Phi_f\colon\hcA_v\to\hcA_u$.
\begin{enumerate}
\item[1.] $H - R\d_y$ is an evolutionary operator on~$\hcA_v$.

\smallskip

\item[2.] Denote $H_1:=H$, $R_1:=R$, and suppose that $f$ is a conservation law for another evolutionary operator $H_2$ on $\hcA_u$, $H_2(f) = \d_x R_2$, which commutes with $H_1$. Suppose also that $H_1|_{u^*_*=0}=H_2|_{u^*_*=0}=0$. Then the evolutionary operators $H_1 - R_1 \d_y$ and $H_2 - R_2 \d_y$ on $\hcA_v$ commute.

\smallskip

\item[3.] The map $g\mapsto \frac{g}{1+f}$ gives a one-to-one correspondence between the conservation laws of the operator $H$ on $\hcA_u$ and the conservation laws of the operator $H-R\d_y$ on $\hcA_v$. Moreover, if $H(g)=\d_x R_g$, then $(H - R \d_y)\big(\frac{g}{1+f}\big)=\d_y\big(R_g - \frac{g R}{1+f}\big)$.
\end{enumerate}
\end{proposition}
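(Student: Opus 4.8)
The plan is to carry out every computation inside $\hcA_u$, using $\Phi_f$ to identify $\hcA_v$ with $\hcA_u$ and $\d_y$ with $(1+f)^{-1}\d_x$, as explained before the proposition. Write $w:=(1+f)^{-1}$, so that $\d_y=w\d_x$, $\Im\d_y=w\,\Im\d_x$, $H(w)=-w^2H(f)=-w^2\d_x R$, and $\d_x w=-w^2\d_x f$; the one tool I use throughout is the operator identity $[H,\phi\d_x]=H(\phi)\d_x$, valid for any $\phi\in\hcA_u$ because $H$ is a derivation commuting with $\d_x$. Part~1 then reduces to checking that $H-R\d_y$ commutes with $\d_y$ (the Leibniz rule being automatic, as both $H$ and $R\d_y$ are derivations of $\hcA_u$): I would compute $[H,\d_y]=H(w)\d_x=-w^2(\d_x R)\d_x$ and $[R\d_y,\d_y]=-\d_y(R)\d_y=-w^2(\d_x R)\d_x$, which agree, giving $[H-R\d_y,\d_y]=0$.

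For Part~2, I would expand
$$[H_1-R_1\d_y,H_2-R_2\d_y]=[H_1,H_2]-[H_1,R_2\d_y]-[R_1\d_y,H_2]+[R_1\d_y,R_2\d_y].$$
The first term vanishes by hypothesis, and each remaining bracket is a first-order operator $(\,\cdot\,)\d_x$ computed via the tool above; collecting coefficients, every term quadratic in the $R_i$ cancels and only $w\big(H_2(R_1)-H_1(R_2)\big)\d_x$ survives. Finally $\d_x\big(H_i(R_j)-H_j(R_i)\big)=H_i(H_jf)-H_j(H_if)=[H_i,H_j](f)=0$, so $H_i(R_j)-H_j(R_i)\in\mbC[[\eps]]$, and since $H_1|_{u^*_*=0}=H_2|_{u^*_*=0}=0$ forces $H_i(R_j)|_{u^*_*=0}=0$ this constant is $0$; hence the commutator vanishes.

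For Part~3, I would reduce everything to the single \emph{master identity}
$$(1+f)\,(H-R\d_y)\!\left(\tfrac{g}{1+f}\right)=H(g)-\d_x\!\left(\tfrac{gR}{1+f}\right),$$
verified by expanding the left-hand side with $H(w)=-w^2\d_x R$ and $\d_x w=-w^2\d_x f$: every term other than $H(g)$ assembles into the displayed total $\d_x$-derivative. Since $\Im\d_y=w\,\Im\d_x$, multiplying by $1+f$ shows that $(H-R\d_y)\big(\tfrac{g}{1+f}\big)\in\Im\d_y$ exactly when $H(g)\in\Im\d_x$; thus $g\mapsto\tfrac{g}{1+f}$ takes conservation laws of $H$ bijectively to those of $H-R\d_y$, with inverse $\tilde g\mapsto(1+f)\tilde g$. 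The explicit formula is read off by dividing the identity by $1+f$ and using $w\d_x=\d_y$: if $H(g)=\d_x R_g$ then $(H-R\d_y)\big(\tfrac{g}{1+f}\big)=\d_y\big(R_g-\tfrac{gR}{1+f}\big)$.

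The main obstacle is not any single computation, since these are mechanical, but two points requiring care. In Part~2 the vanishing of the constant $H_i(R_j)-H_j(R_i)$ genuinely needs the normalization $H_i|_{u^*_*=0}=0$ on top of $[H_1,H_2]=0$ (commutativity alone only places it in $\mbC[[\eps]]$). In Part~3 one must correctly pin down the total-derivative term $-\d_x\big(\tfrac{gR}{1+f}\big)$ in the master identity, which is forced by the bookkeeping but easy to mis-sign; once it is in hand, both the bijection and the explicit formula follow at once.
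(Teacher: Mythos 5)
Your proposal is correct and follows essentially the same route as the paper: the same commutator identity $[H,\phi\d_x]=H(\phi)\d_x$ drives Parts 1 and 2, the surviving term $w\bigl(H_2(R_1)-H_1(R_2)\bigr)\d_x$ is killed by exactly the paper's argument ($\d_x$ of it vanishes by commutativity, then the normalization $H_i|_{u^*_*=0}=0$ removes the constant), and your Part 3 computation is the paper's, merely repackaged by clearing the factor $1+f$ so that both directions of the bijection drop out of one identity instead of being checked separately. No gaps.
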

\begin{proof}
\emph{1}. The Leibniz rule holds for $H - R\d_y$ by linearity, so we only need to check the equality $[H - R\d_y, \d_y] = 0$. Indeed,
\begin{multline*}
[H - R\d_y, \d_y]= \left[H, \d_y \right] - [R\d_y, \d_y]= \left[H, \frac{1}{1+f}\d_x \right] - [R\d_y, \d_y]=\\
= H\left(\frac{1}{1+f}\right) \d_x + (\d_y R) \d_y=-\frac{H(f)}{(1+f)^2} \d_x + \frac{\d_x R}{(1+f)^2}  \d_x = 0.
\end{multline*}

\medskip

\emph{2}. We observe that $H_2(R_1) - H_1(R_2) = 0$. Indeed, $\d_x(H_2(R_1) - H_1(R_2)) = H_2H_1(f) - H_1 H_2(f) = 0$, which implies that $H_2(R_1) - H_1(R_2)\in\mbC[[\eps]]$. However, since $H_1|_{u^*_*=0}=H_2|_{u^*_*=0}=0$, we immediately obtain $H_2(R_1) - H_1(R_2)=0$. We further compute
\begin{align*}
&[H_1 - R_1\d_y, H_2 - R_2\d_y] = [H_1, H_2] - [H_1, R_2\d_y] - [R_1\d_y, H_2] + [R_1 \d_y, R_2\d_y]= \\
     &\hspace{4.23cm}= -\left[H_1, \frac{R_2}{1+f}\d_x\right] - \left[\frac{R_1}{1+f}\d_x, H_2\right] + \left[R_1\d_y, R_2\d_y\right],\\
&-\left[H_1, \frac{R_2}{1+f}\d_x\right] = \left( -\frac{H_1(R_2)}{1+f} + \frac{R_2 \cdot H_1(f)}{(1+f)^2}\right)\d_x = \left( -\frac{H_1(R_2)}{1+f} + \frac{R_2 \cdot \d_x R_1}{(1+f)^2}\right)\d_x,\\
&- \left[\frac{R_1}{1+f}\d_x, H_2\right] = \left(\frac{H_2(R_1)}{1+f} - \frac{R_1\cdot H_2(f)}{(1+f)^2}\right)\d_x = \left(\frac{H_2(R_1)}{1+f} - \frac{R_1\cdot \d_x R_2}{(1+f)^2}\right)\d_x,\\
&\left[R_1\d_y, R_2\d_y\right] = \left(R_1\cdot \d_y R_2 - R_2\cdot\d_y R_1 \right)\d_y = \frac{1}{(1+f)^2}\left(R_1 \cdot \d_x R_2 - R_2\cdot \d_x R_1 \right)\d_x.
 \end{align*}
We see that all terms in the resulting sum cancel out.

\medskip

\emph{3}. If $H(g)=\d_x R_g$, then we have
\begin{align*}
(H - R \d_y)\left(\frac{g}{1+f}\right)&= \frac{H(g)}{1+f} - \frac{g \cdot H(f)}{(1+f)^2} - \frac{R \cdot \d_y g}{1+f} + \frac{gR \cdot \d_y f}{(1+f)^2}=\\
&= \d_y R_g - \frac{g \cdot \d_y R}{1+f}-\frac{R \cdot \d_y g}{1+f} + \frac{gR \cdot\d_y f}{(1+f)^2}=\\
&= \d_y \left(R_g - \frac{g R}{1+f}\right).
\end{align*}
Conversely, if $(H-R\d_y)\tg=\d_y R_{\tg}$, then in the same way we check that $H\left((1+f)\tg\right)=\d_x\left(R_{\tg}+\tg R\right)$.
\end{proof}

\medskip

\begin{remark}\label{remark:action by conservation laws}
We see that the reciprocal transformations give an action of the set $\mathrm{CL}(H):=\{f\in\hcA_{u;0}|H(f)\in\Im(\d_x),\,f|_{u^*_*=0}=0\}$ on the evolutionary operator $H$: given $f\in\mathrm{CL}(H)$, we transform the operator $H$ to the operator $H-R_f\d_y$, where $R_f$ is given by $H(f)=\d_x R_f$ with $R_f|_{u^*_*=0}=0$. Since, by Part 3 of the proposition, there is a one-to-one correspondence between the conservation laws of the operator $H$ and the conservation laws of the operator $H-R_f\d_y$, we can further act on $H-R_f\d_y$ by any $g\in\mathrm{CL}(H)$, meaning that we act on $H-R_f\d_y$ by $\frac{g}{1+f}\in\mathrm{CL}(H-R_f\d_y)$. Let us compute the composition of two such transformations. Indeed, for this we consider another $N$-tuple of formal variables $w^1,\ldots,w^N$ and the reciprocal transformation $\Phi_{\frac{g}{1+f}}\colon\hcA_w\to\hcA_v$. Let us denote the spatial variable in the algebra $\hcA_w$ by $z$. Acting by the conservation law $\frac{g}{1+f}\in\mathrm{CL}(H-R_f\d_y)$ on the operator $H-R_f\d_y$, we obtain the evolutionary operator $H-R_f\d_y-\left(R_g-\frac{g R_f}{1+f}\right)\d_z$ on $\hcA_w$, and a direct computation shows that it is equal to the operator $H-(R_f+R_g)\d_z$. So the result is the same as the result of the action on $H$ by $f+g\in\mathrm{CL}(H)$. Thus, the constructed action of the set $\mathrm{CL}(H)$ on the evolutionary operator $H$ is actually a group action with the respect to the standard additive group structure on $\mathrm{CL}(H)$. 
\end{remark}
}


\end{document}